\documentclass[11pt]{article}

\usepackage[dvipsnames]{xcolor}
\usepackage{geometry}
\usepackage{setspace} 
\setstretch{1.25}
\usepackage{amssymb,amsmath,amsthm}
\usepackage[utf8]{inputenc}
\usepackage{amsthm}
\usepackage{natbib}
\usepackage{float}
\usepackage{nicematrix}
\usepackage{resizegather}
\usepackage{bbding}
\usepackage{tikz}
\usepackage{dsfont}
\usepackage{csquotes}
\usepackage{graphicx}
\usepackage{subcaption}
\usepackage{mathtools}

\mathtoolsset{showonlyrefs=true}

%%%%%%%%%%%%%%%%%%%%%%%%%%%%%%%%%%%
\definecolor{cornellred}{rgb}{0.7, 0.11, 0.11}

\usepackage[colorlinks=true, citecolor=cornellred, linkcolor=black]{hyperref}

%%%%%%%%%%%%%%%%%%%%%%%%%%%%%%%%%%%%%%%%%

\newtheorem{theorem}{Theorem}[section]
\newtheorem{proposition}[theorem]{Proposition}

\newtheorem{lemma}[theorem]{Lemma}
\newtheorem{definition}[theorem]{Definition}
\theoremstyle{remark}
\newtheorem{remark}{Remark}[section]

\title{A Generalization of Lanchester's Model of Warfare}
\author{N. Cangiotti{$^\star$}, M. Capolli{$^\ast$} and M. Sensi{$^\dagger$}\\[1em]
\small $^\star$Polytechnic University of Milan, Department of Mathematics, via Bonardi 9, \\ \small 20133 Milan, Italy. Email: \texttt{nicolo.cangiotti@polimi.it}\\
 \small $^\ast$Polish Academy of Sciences, Institute of Mathematics, Jana i Jedrzeja Sniadeckich 8,\\ \small 00-656 Warsaw, Poland. Email: \texttt{mcapolli@impan.pl}
\\
\small $^\dagger$MathNeuro Team, Inria at Universit\' e C\^ote d'Azur, 2004 Rte des Lucioles,\\ \small 06410 Biot, France. Email: \texttt{mattia.sensi@inria.fr}}

\date{}

%%%%%%%%%%%%%%%%%%%%%%%%%

\begin{document}

\maketitle
\begin{abstract} 
The classical Lanchester's model is shortly reviewed and analysed, with particular attention to the critical issues that intrinsically arise from the mathematical formalization of the problem. We then generalize a particular version of such a model describing the dynamics of warfare when three or more armies are involved in the conflict. Several numerical simulations are provided.
\bigskip

%\textbf{EJOR Keywords}: System Dynamics, Continuous Optimization.
%\smallskip

\noindent
\textbf{Keywords.} Ordinary differential equations; Lanchester models; multilateral conflicts.
\smallskip

\noindent
\textbf{MSC2010.} Primary: 34C60, 34K60. Secondary: 65L05, 65S05.
\end{abstract}
\section{Introduction}

Human interactions supply many phenomena, which can be modelled and analyzed with applied mathematics. One of the most famous models, developed in this context, concerns the role of the military strategy (and the linked decision problems) during a conflict between two or more armies. It is strange to think that, despite the numerous wars from ancient time up to nowadays, a rigorous mathematical modeling of warfare strategies is only fairly recent. 
%From a heuristic point of view, the military commander was influenced for a long time by the ancient and famous Chinese military treatise due to Sun Tzu, \emph{The Art of War} \cite{SunTzu}. However,
In fact, the construction of a consistent theory reconciling the intuitive idea with the mathematical formalism only came close to World War I. In 1916, Frederick \cite{Lanchester1916,Lanchester1956} proposed a fist system of differential equations to model the clash between two armies. 

As we shall explain extensively in the next section, Lanchester provided a useful characterization of the dynamics of the battle by using two simple differential equations. Starting from these equations, he proved an interesting result, the so called \emph{Lanchester's square law}, which provides important information concerning the numerical evolution of the two forces. In the following decades, Lanchester's law was studied by many authors as %as one can see, for instance, in 
\cite{Bracken95,Davis1995,Fox2010,Taylor1983}. Over the years, various generalization were proposed, starting from the action of mixed forces due to \cite{Lepingwell1987}. Most of these attempts are summarized by MacKay in \cite{MacKay2006}. Moreover, the applications of Lanchester's type models goes beyond the original purpose, as its core ideas were used to describe human, by \cite{Johnson2015}, and nonhuman, by \cite{Clifton2020}, evolution. Recently, the interest of such a model made a comeback in the operational research literature as series of works proposing new stimulating models. In particular, a game theoretical approach is given in \cite{Kress2018a, Kress2018b}. This approach exploits a matrix formalism similar to the one used in this work, as well as a network-theoretical approach used in \cite{kalloniatis2021optimising}.

The asymptotic dynamics were thoroughly analyzed for two heterogeneous (mixed) forces by \cite{MacKay2009} and by \cite{Lin2014}, in order to determine the optimal fire distribution policy, and for models that capture irregular warfare, such as insurgencies studied by \cite{Kress2020}. The authors \cite{Liu2012} moved some criticism to \cite{MacKay2009}, to which MacKay replied stressing the strengths of his approach. Lastly, we mention another sophisticated attempt which involves the use of partial differential equations with time variables developed by \cite{Sprandlin2007}.

Given the significance of such a model, we believe it is worth providing an accurate formal analysis
and a precise model in terms of dynamical systems research. Thus, the aim of this paper is to provide a mathematically consistent generalization of the classical model that sorts out the critical issues arising from the original construction. This construction allows us to describe general higher dimensional scenarios.

The paper is organized as follows. In Section \ref{Sec2}, we summarize in a mathematical framework the classical Lanchester's model and its related quadratic and linear laws. Section \ref{Sec3} contains our main contribution to the modelling effort. It is devoted to the proposal of Lanchester's type models with more than two armies. The attempt here is to provide a more mathematically coherent description of the dynamics of the battles, which is consistent with a realistic interpretation. As we shall see, there are various possible combinations, which lead to different interpretations of the real-world scenarios we aim to describe. Our analysis moves between these different scenarios, supplying various results which are validate with numerical simulations. Finally, in Section \ref{Sec4}, we take stock of the work, arguing possible improvements of our models and discussing leads for future developments.

\section{Classical Lanchester's Model}
\label{Sec2}

We begin by reviewing the well-known \emph{equations of warfare}, first formulated by \cite{Lanchester1916}, and then extensively developed by \cite{Taylor1983,MacKay2006,MacKay2009,Kress2018a,Kress2018b}. The aim of these equations was to model the change in combat power of two enemy factions facing each other. Throughout the paper, we indicate with $F'$ the derivative of the function $F$ with respect to the time variable $t$. We do not normalize the size of the armies, as it brings no advantage to our analysis, and we pick their values at time $t=0$ in order to produce clearly interpretable figures.

Lanchester's model can be summarized by two systems of differential equations, one for the \emph{aimed} fire:
\begin{equation}\label{lanch_aim}
\begin{cases}
\displaystyle{R'=-gG},\vspace{10pt}\\
\displaystyle{G'=-rR},
\end{cases}
\end{equation}
and one for the \emph{unaimed} one:
\begin{equation}\label{lanch_unaim}
\begin{cases}
\displaystyle{R'=-gRG},\vspace{10pt}\\
\displaystyle{G'=-rRG}.
\end{cases}
\end{equation}

Here, $G=G(t)$ and $R=R(t)$ represent the size of the two factions (Green army and Red army) at any time $t>0$; whereas $g,r>0$ are coefficients which indicate the fighting ability of each unit in the corresponding faction. The analysis of these systems provide the celebrated Lanchester's quadratic and linear law respectively.
\subsection{Lanchester's Quadratic Law}

We start  by providing a classical result for system \eqref{lanch_aim}.
\begin{theorem}[Lanchester's Quadratic Law]
\label{Thm:LanchSquare}
The solutions to \eqref{lanch_aim} satisfy the state equation
\begin{equation}\label{quad_law}
gG^2(t)-rR^2(t) = K,
\end{equation}
where $K$ is a constant that depends only on the fighting abilities of the two armies, $g$ and $r$, and on their initial size $G(0)$ and $R(0)$. 
\end{theorem}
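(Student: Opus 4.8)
The plan is to verify directly that the quantity $V(t) := gG^2(t) - rR^2(t)$ is a first integral of the system \eqref{lanch_aim}, i.e.\ that it is constant along every solution, and then to identify the constant by evaluating at $t=0$.

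First I would assume $(R,G)$ is a solution of \eqref{lanch_aim} on some interval $I\ni 0$ (existence and uniqueness being immediate since the right-hand side is linear, hence globally Lipschitz). Differentiating $V$ with respect to $t$ and substituting the equations gives
\begin{equation}
V'(t) = 2gG(t)G'(t) - 2rR(t)R'(t) = 2gG(t)\bigl(-rR(t)\bigr) - 2rR(t)\bigl(-gG(t)\bigr) = 0,
\end{equation}
so $V$ is constant on $I$. Evaluating at $t=0$ yields $K = gG^2(0) - rR^2(0)$, which manifestly depends only on $g$, $r$, $G(0)$ and $R(0)$, as claimed. This establishes \eqref{quad_law}.

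As an alternative derivation worth recording, one can argue at the level of trajectories: wherever $G\neq 0$ (equivalently $R'\neq 0$), the chain rule gives $\dfrac{dG}{dR} = \dfrac{G'}{R'} = \dfrac{rR}{gG}$, hence $gG\,dG = rR\,dR$, and integrating produces $\tfrac{g}{2}G^2 - \tfrac{r}{2}R^2 = \text{const}$, i.e.\ the same state equation up to the harmless factor $2$. I would, however, present the direct differentiation of $V$ as the main argument, since it is valid on the entire solution interval without case distinctions.

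There is no serious obstacle here; the only point requiring a little care is the one just mentioned — the separation-of-variables argument implicitly divides by $G$ (or $R'$), so if one wants a fully rigorous statement valid up to and including the moment a force is annihilated, the invariant-function computation is the cleaner route, and continuity of $V$ then extends the identity to the closure of $I$ automatically.
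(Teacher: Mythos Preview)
Your proposal is correct and follows exactly the approach the paper indicates: differentiating \eqref{quad_law} along solutions of \eqref{lanch_aim} and observing the derivative vanishes. The paper states this in a single sentence without writing out the computation, so your version is simply a more explicit rendering of the same argument (with a harmless bonus alternative via separation of variables).
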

The claim follows by derivating \eqref{quad_law} with respect to the time variable $t$ along solutions of \eqref{lanch_aim}.
Equation \eqref{quad_law} allows an \emph{a priori} estimate of the winning side: if $C>0$, then only $R(t)$ can ever be zero and thus faction G will win. 

It is important to notice the presence of squares in \eqref{lanch_aim}. It indicates that the outcome of the battle is more sensitive to the changes in the number of units rather than to changes in their effectiveness.

While equation \eqref{quad_law} is interesting for practical application, from a mathematical point of view we may be interested in solving system \eqref{lanch_aim} explicitly.
\begin{lemma}
The solutions of \eqref{lanch_aim} are:
\begin{equation}\label{solution_aim2D}
    \begin{cases}
    G(t)= \displaystyle{G(0) \cdot  \cosh \left(\sqrt{gr} \cdot  t\right)-R(0)\sqrt{\frac{r}{g}} \sinh \left(\sqrt{gr} \cdot  t\right)},\\
    \\
    R(t)= \displaystyle{R(0) \cdot \cosh \left(\sqrt{gr} \cdot  t\right)-G(0)
    \sqrt{\frac{g}{r}} \sinh \left(\sqrt{gr} \cdot  t\right)}.
    \end{cases}
\end{equation}
In particular, in the case $rR^2(0) = gG^2(0)$, the solutions to \eqref{lanch_aim} are reduced to:
\begin{equation}
    \begin{cases}
    G(t)=G(0)\exp(-\sqrt{gr}t), \\
    R(t)=R(0)\exp(-\sqrt{gr}t).
    \end{cases}
\end{equation}
\end{lemma}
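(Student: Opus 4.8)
The plan is to decouple the linear system \eqref{lanch_aim} into a single second-order constant-coefficient ODE. Differentiating the second equation of \eqref{lanch_aim} and substituting the first gives $G'' = -rR' = gr\,G$, so $G$ solves the homogeneous equation $G'' - gr\,G = 0$, whose general solution is $G(t) = A\cosh(\sqrt{gr}\,t) + B\sinh(\sqrt{gr}\,t)$. Imposing the initial data fixes $A = G(0)$, and since $G'(0) = -rR(0)$ while $G'(0) = B\sqrt{gr}$, we get $B = -rR(0)/\sqrt{gr} = -R(0)\sqrt{r/g}$, which is precisely the first line of \eqref{solution_aim2D}. The second line follows symmetrically (differentiating the first equation instead), or simply by setting $R = -G'/r$ and simplifying. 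Because the right-hand side of \eqref{lanch_aim} is linear and therefore globally Lipschitz, the Picard--Lindel\"of theorem guarantees this is the unique solution, so the argument is complete.

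For a shorter write-up one may skip the derivation and instead \emph{verify} directly that the functions in \eqref{solution_aim2D} satisfy \eqref{lanch_aim}: differentiate using $(\cosh)' = \sinh$ and $(\sinh)' = \cosh$, pull out the common factor $\sqrt{gr}$, and check that the remaining terms reproduce $-gG$ and $-rR$ respectively; then confirm the values at $t=0$ and invoke uniqueness. Either route is a routine computation.

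For the degenerate case $rR^2(0) = gG^2(0)$, taking positive square roots yields $\sqrt{r}\,R(0) = \sqrt{g}\,G(0)$, equivalently $R(0)\sqrt{r/g} = G(0)$ and $G(0)\sqrt{g/r} = R(0)$. Substituting these relations into \eqref{solution_aim2D} and using $\cosh x - \sinh x = e^{-x}$ collapses both formulas to $G(t) = G(0)\exp(-\sqrt{gr}\,t)$ and $R(t) = R(0)\exp(-\sqrt{gr}\,t)$, as claimed.

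I do not expect a genuine obstacle here; the computation is elementary. The only points deserving a little care are translating the initial condition $G'(0) = -rR(0)$ into the correct coefficient (it is easy to invert $\sqrt{r/g}$ and $\sqrt{g/r}$), and remembering that these closed forms solve the ODE for all $t \in \mathbb{R}$, even though the military interpretation is meaningful only on the time interval where both $G$ and $R$ stay nonnegative.
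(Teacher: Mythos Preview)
Your proof is correct and complete. The paper itself states this lemma without proof, so there is nothing to compare against; your approach---decoupling into the second-order equation $G''=gr\,G$, matching initial data, and then collapsing the degenerate case via $\cosh x-\sinh x=e^{-x}$---is the standard one and would be entirely appropriate here.
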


Notice that, since $\cosh$ and $\sinh$ are asymptotically equivalent, then the behaviour of system \eqref{solution_aim2D} when $t\to+\infty$ is completely determined by the initial constants $G(0),g,R(0)$, and $r$. In particular whenever $gG(0)^2 \neq rR(0)^2$ one of the two will diverge to $+\infty$, while the other will diverge to $-\infty$, see Figure \ref{fig:quadr}.

\subsection{Lanchester's Linear Law}

For the case of the linear law described in system \eqref{lanch_unaim}, a result similar to Theorem \ref{Thm:LanchSquare} holds.

\begin{theorem}[Lanchester's Linear Law]
\label{Thm:LanchLinear}
The solutions to \eqref{lanch_unaim} satisfy the state equation
\begin{equation}\label{lin_law}
gG(t)-rR(t) = K,
\end{equation}
where $K$ is a constant that depends only on the fighting abilities of the two armies, $g$ and $r$, and on their initial size $G(0)$ and $R(0)$. 
\end{theorem}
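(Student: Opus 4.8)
The plan is to mimic exactly the strategy already used for Theorem~\ref{Thm:LanchSquare}: exhibit the candidate conserved quantity and show its time derivative vanishes along every solution of \eqref{lanch_unaim}. Concretely, define $\Phi(t) = gG(t) - rR(t)$ and differentiate with respect to $t$, substituting $R' = -gRG$ and $G' = -rRG$ from \eqref{lanch_unaim}. One obtains $\Phi'(t) = gG'(t) - rR'(t) = g(-rRG) - r(-gRG) = -grRG + grRG = 0$, so $\Phi$ is constant on the (maximal) interval of existence of the solution. Evaluating at $t=0$ identifies the constant as $K = gG(0) - rR(0)$, which manifestly depends only on $g$, $r$, $G(0)$, and $R(0)$, as claimed.

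The only genuine subtlety — and the single point I would be careful about — is that this argument presupposes a solution actually exists and is differentiable. For \eqref{lanch_unaim} the right-hand side is a polynomial (hence $C^\infty$) vector field on $\mathbb{R}^2$, so the Picard--Lindel\"of theorem gives a unique local $C^1$ (in fact analytic) solution through any initial datum $(G(0),R(0))$; on the relevant region $G,R \ge 0$ one can moreover check the solution stays bounded (both components are nonincreasing when they start nonnegative) and hence extends to all $t \ge 0$. Thus there is no obstruction to differentiating, and the conservation law holds on the whole forward orbit. This is really the same ``main obstacle'' as in the quadratic case, and it is mild.

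A complementary, more conceptual way to present the same fact — which I would mention as a remark rather than the primary proof — is to divide the two equations: away from the axes, $\frac{dR}{dG} = \frac{-gRG}{-rRG} = \frac{g}{r}$, a constant slope, so the orbits are straight line segments $gG - rR = \text{const}$ in the $(G,R)$-plane; this is precisely the \emph{linear} law, in contrast with the hyperbolas $gG^2 - rR^2 = \text{const}$ of the quadratic law. I would close by noting the interpretive consequence parallel to the one drawn after Theorem~\ref{Thm:LanchSquare}: since the invariant is now linear rather than quadratic in the troop numbers, under unaimed fire the battle outcome is equally sensitive to changes in force size and to changes in per-unit effectiveness — the sign of $K = gG(0) - rR(0)$ decides which army is annihilated first.
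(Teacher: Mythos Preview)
Your proof is correct and follows exactly the approach the paper indicates: differentiate the candidate quantity $gG(t)-rR(t)$ with respect to $t$ along solutions of \eqref{lanch_unaim} and observe that the derivative vanishes identically. The additional remarks on well-posedness and the geometric description of orbits are sound but go beyond what the paper provides, which is simply the one-line derivative computation parallel to that for Theorem~\ref{Thm:LanchSquare}.
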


As for Theorem \ref{Thm:LanchSquare}, the proof of this theorem can be obtained by direct derivation with respect to the time variables along solutions of \eqref{lanch_unaim}. The interpretation of equation \eqref{lin_law} is fundamentally different from the one of equation \eqref{quad_law}. The linear law does not gives a predominant role to the size of the army with respect to his combat power. In the case of the square law, a lack of combat expertise could be easily balanced by rising the number of soldiers by a small amount (i.e., increasing $G(0)$ or $R(0)$, respectively), whereas in the case of the linear law this is no longer true.
\smallskip

As for the case of the quadratic law, we may be interested in the mathematical solution of \eqref{lanch_unaim}.
\begin{lemma}\label{expo_dec}
Let us suppose $rR(0) \neq gG(0)$, then the solutions to \eqref{lanch_unaim} are:
\begin{equation}\label{sol_unaim}
    \begin{cases}
    G(t)=\displaystyle{\frac{(rG(0)R(0)-gG(0)^2)\exp{((gG(0)-rR(0))t)}}{rR(0) - gG(0)\exp{((gG(0)-rR(0))t)}}}, \\
    \\
    R(t)=\displaystyle{\frac{rR(0)^2-gG(0)R(0)}{rR(0) - gG(0)\exp{((gG(0)-rR(0))t)}}}.
    \end{cases}
\end{equation}
Moreover, in the case $rR(0) = gG(0)$, the solutions to \eqref{lanch_unaim} are reduced to:
\begin{equation}
    \begin{cases}
    G(t)=\displaystyle{\frac{G(0)}{1+gG(0)t}}, \\
    \\
    R(t)=\displaystyle{\frac{R(0)}{1+rR(0)t}}.
    \end{cases}
\end{equation}
\end{lemma}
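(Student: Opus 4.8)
The plan is to exploit the conserved quantity from Theorem \ref{Thm:LanchLinear} to decouple the system, then integrate the resulting scalar ODE. First I would invoke Theorem \ref{Thm:LanchLinear}: along any solution of \eqref{lanch_unaim} we have $gG(t)-rR(t)=K$ with $K=gG(0)-rR(0)$, which by assumption is nonzero. Solving for $G$ gives $G=(rR+K)/g$, and substituting into the second equation $R'=-rRG$ yields the scalar Riccati-type equation $R'=-\frac{r}{g}R(rR+K)=-\frac{r^2}{g}R^2-\frac{rK}{g}R$. This is a Bernoulli equation; the standard substitution $u=1/R$ linearizes it to $u'=\frac{rK}{g}u+\frac{r^2}{g}$, a first-order linear ODE with constant coefficients.

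Next I would solve this linear ODE for $u$ by the integrating factor $e^{-rKt/g}$, obtaining $u(t)=\left(u(0)+\frac{r}{K}\right)e^{rKt/g}-\frac{r}{K}$, and then invert to get $R(t)=1/u(t)$. Writing $u(0)=1/R(0)$, clearing denominators, and recalling $K=gG(0)-rR(0)$ so that $rKt/g = (gG(0)-rR(0))t$ up to the sign bookkeeping (note $-rK/g \cdot t$ appears with the convention in the statement, giving the exponent $(gG(0)-rR(0))t$ after simplification), one recovers the stated closed form for $R(t)$; then $G(t)=(rR(t)+K)/g$ produces the stated expression for $G(t)$ after algebraic rearrangement. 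The degenerate case $rR(0)=gG(0)$, i.e. $K=0$, must be handled separately: then the scalar equation reduces to $R'=-\frac{r^2}{g}R^2$, equivalently $R'=-rR^2/ (gG(0)/R(0)\cdot\ldots)$ — more directly, $K=0$ forces $gG\equiv rR$ along the solution, so $R'=-rRG=-\frac{r^2}{g}R^2$, whose solution with $R(0)$ is $R(t)=R(0)/(1+rR(0)t)$ after using $r^2/g\cdot R(0)=r\cdot rR(0)/g = r\cdot gG(0)/g=rG(0)$... one checks $R'=-rR^2 \cdot (r/g)$ integrates to $1/R = 1/R(0)+ (r^2/g)t$, and $r^2/g \cdot$ combined with $gG(0)=rR(0)$ gives the clean denominator $1+rR(0)t$; symmetrically for $G$.

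I expect the main obstacle to be purely bookkeeping rather than conceptual: matching the sign conventions and the exact algebraic form in \eqref{sol_unaim}. In particular, one must be careful that the exponent is written as $\exp((gG(0)-rR(0))t)$ and that the numerators factor as $rG(0)R(0)-gG(0)^2 = G(0)(rR(0)-gG(0)) = -G(0)K$ and $rR(0)^2-gG(0)R(0)=R(0)(rR(0)-gG(0))=-R(0)K$; keeping track of these common factors of $K$ (and their cancellation) is where errors are likely to creep in. A clean alternative that avoids the Bernoulli substitution is to verify the claimed formulas directly: differentiate \eqref{sol_unaim}, check that $G'=-rRG$ and $R'=-gRG$ hold identically, and check the initial conditions at $t=0$; by uniqueness for this Lipschitz ODE system this suffices. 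I would likely present the derivation via the conserved quantity for insight but remark that direct substitution gives an immediate independent verification.
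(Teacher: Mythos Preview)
The paper states this lemma without proof, so there is nothing to compare your approach against; your strategy of using the linear law to decouple the system into a single Bernoulli equation is a natural and correct route, and the alternative of direct verification by differentiation plus uniqueness would also be perfectly acceptable.

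That said, there is a concrete slip in your execution that propagates through the computation. You write ``substituting into the second equation $R'=-rRG$'', but in system \eqref{lanch_unaim} the equation for $R$ is $R'=-gRG$; it is $G'$ that carries the coefficient $r$. With the correct equation, substituting $G=(rR+K)/g$ gives
\[
R' \;=\; -gR\cdot\frac{rR+K}{g} \;=\; -rR^{2}-KR,
\]
so the Bernoulli substitution $u=1/R$ yields the cleaner linear ODE $u'=Ku+r$, whose solution is
\[
u(t)=\Bigl(u(0)+\tfrac{r}{K}\Bigr)e^{Kt}-\tfrac{r}{K}
     =\frac{gG(0)e^{Kt}-rR(0)}{R(0)K},
\]
using $K+rR(0)=gG(0)$. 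Inverting gives $R(t)=R(0)K/\bigl(gG(0)e^{Kt}-rR(0)\bigr)$, which is exactly \eqref{sol_unaim} once you multiply numerator and denominator by $-1$ and note $-R(0)K=rR(0)^{2}-gG(0)R(0)$. The exponent is simply $Kt=(gG(0)-rR(0))t$, with no extra factor of $r/g$; your remark that the exponent comes out as $rKt/g$ and then matches ``after simplification'' is a symptom of the mislabelled coefficient, not a genuine sign issue.

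The degenerate case is also cleaner than your sketch suggests: $K=0$ gives $R'=-rR^{2}$ directly (and symmetrically $G'=-gG^{2}$), which integrates immediately to $R(t)=R(0)/(1+rR(0)t)$ without any use of the identity $gG(0)=rR(0)$ in the denominator.
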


Notice that, in contrast with the solutions of \eqref{lanch_aim}, one of the solutions of \eqref{lanch_unaim} tends to a positive limit, while the other one tends to 0 as $t\to\infty$, see also Figure \ref{fig:line}.

\begin{figure}[ht]  
\centering
    \begin{subfigure}[t]{0.45\textwidth}
        \centering
    \includegraphics[width=0.99\textwidth]{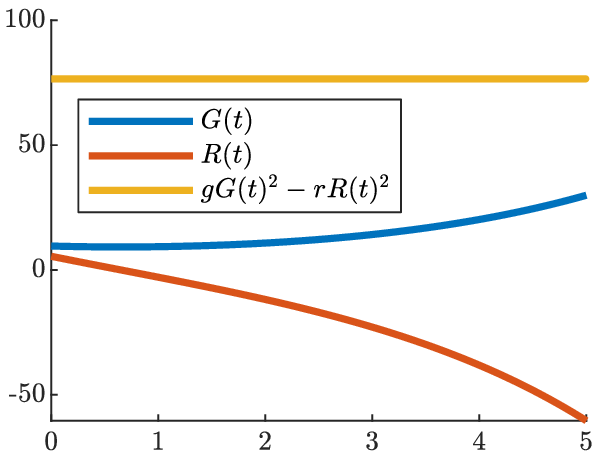}
    \caption{Typical solution of system \eqref{lanch_aim}.}
    \label{fig:quadr}
     \end{subfigure}
    \hspace{0.05\textwidth}
         \begin{subfigure}[t]{0.45\textwidth}
        \centering
         \centering
    \includegraphics[width=0.99\textwidth] {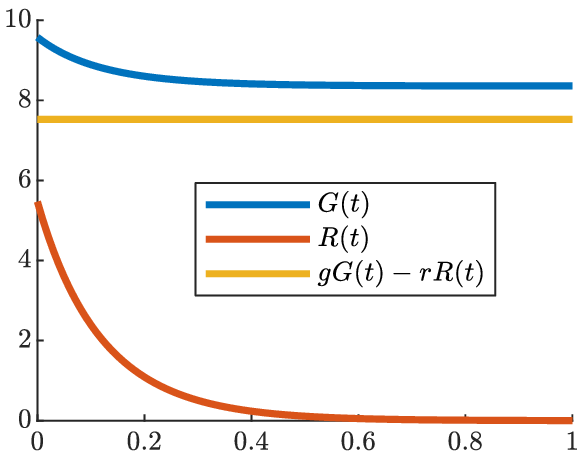}
    \caption{Typical solution of system \eqref{lanch_unaim}.}
    \label{fig:line}
         \end{subfigure}
         \caption{Evolution in time of (a) solutions of \eqref{lanch_aim}, plotted together with \eqref{quad_law}, with $G(0)$ and $R(0)$ chosen randomly in $(0,10)$ and $r=0.2$, $g=0.9$; and (b) solutions of \eqref{lanch_unaim}, as well as the constant of motion \eqref{lin_law}, with $G(0)$ and $R(0)$ as in Figure \ref{fig:quadr} and $r=0.2$, $g=0.9$. Notice that in (a) $R(t)$, the size of the losing army, quickly becomes negative and keeps decreasing. In turn, the negative sign of $R$ makes $G$ increase and asymptotically diverge to $+\infty$. Instead, in (b) $R(t)$, the size of the losing army, quickly approaches 0 as the system asymptotically converges to its equilibrium $G_{\infty}=G(0)-rR(0)/g$, $R_{\infty}=0$.}
         \label{figure_1}
\end{figure}

\section{Multi-Battle Model}
\label{Sec3}

In this section, we provide multi-dimensional generalizations of systems \eqref{lanch_aim} and \eqref{lanch_unaim}. Such a generalization can model a multilateral warfare between $n$ armies, if we assume all the attacks are carried out simultaneously. For ease of notation, in the following we shall use $X_1, X_2, \dots, X_n$ to represent the size of the armies involved.

Consider a column vector $X=(X_1, X_2, \dots, X_n)^T$, representing all the $n$ armies involved; assume each army has a fighting ability $x_i$, $i=1,2,\dots,n$, that we collect in the vector $x=(x_1,x_2,\dots,x_n)$.

\begin{definition}\label{def_conflictmatrix}
We define the \emph{conflict matrix} $\tilde{C}$ as the matrix whose entries are
$$
\tilde{c}_{i,j}=\begin{cases}
1 & \text{if army }X_i\text{ and }X_j\text{ are in conflict},\\
0 & \text{otherwise}.
\end{cases}
$$
The associated \emph{weighted conflict matrix} $C$ is obtained by re-scaling $\tilde{C}$, normalizing the column sum to 1. Its entries are
\begin{equation}\label{matrixC}
c_{i,j}=\dfrac{\tilde{c}_{i,j}}{\sum_{k=1}^n \tilde{c}_{k,j}},
\end{equation}
where the re-scaling takes into account the homogeneous division of forces of each army into the various conflicts it is fighting.
We will discuss other re-scaling options in Section \ref{Sec4}.

\end{definition}

Notice that the matrix $\tilde{C}$ is always symmetric\footnote{Here we are assuming that the conflicts are bilateral in the sense that, if army $X_i$ is attacking army $X_j$, then also $X_j$ is attacking $X_i$.} and is telling us who is fighting against whom. The matrix $C$ is generally not symmetric, and it represents the ongoing conflicts and the distributions of the various armies. A generalization of this construction taking into account an heterogeneous distribution of forces, or unilateral conflicts, i.e. a more general matrix $C$, could be used to describe more general scenarios. We remark that the matrix $\tilde{C}$ can be interpreted as the adjacency matrix of a graph with $n$ nodes (the armies), whose edges represent existing conflicts. The irreducibility assumption on $\tilde{C}$ impedes a splitting of the ongoing conflicts into two or more sub-conflicts, as it coincides with assuming that the corresponding graph is connected.

Moreover, we introduce the diagonal matrix $D$, whose diagonal entries are the parameters $d_{i,i}=x_i$. The $n$-dimensional generalization of system \eqref{lanch_aim} is given by
\begin{equation}\label{n-lanch_aim}
    X'=-CD X.
\end{equation}

With the same notation, system \eqref{lanch_unaim} generalizes to 
\begin{equation}\label{n-lanch_unaim}
    X_i'=(-CD X)_i X_i, \quad i=1,2,\dots,n.
\end{equation}
Clearly, for $n=2$, we get
$$
C
= \tilde{C} = \left( \begin{matrix} 0 & 1 \\ 1 & 0 \end{matrix} \right),
$$ 
thus, system \eqref{n-lanch_aim} reduces to system \eqref{lanch_aim}, and system \eqref{n-lanch_unaim} reduces to system \eqref{lanch_unaim}. However, in general, neither \eqref{n-lanch_aim} nor \eqref{n-lanch_unaim} admit constants of motion, which were used in the original models to predict which side would win the conflict, based solely on $X(0)$ and $x$.

We remark that for systems \eqref{lanch_aim} and \eqref{n-lanch_aim}, $X_i(t)=0$ does not stop the evolution in time for the corresponding army. Indeed, if we let \eqref{lanch_aim} evolve past the moment in which e.g. $R=0$ and $G>0$, we would see $R\rightarrow -\infty$ and $G\rightarrow +\infty$, as in Figure \ref{fig:quadr}. This clearly represents a criticality of such a simple model, which is inherited by its $n$-dimensional generalization \eqref{n-lanch_aim}.
 
On the contrary, both in system \eqref{lanch_unaim} and \eqref{n-lanch_unaim}, $X_i(\bar{t})=0$ implies that, for any $t>\bar{t}$, $X_i(t)=0$. Of course, the losing army in \eqref{lanch_unaim} only converges to $0$ as $t\rightarrow +\infty$. However, if we set the initial condition to one of the two to $0$, and the other to any positive constant, the system is at equilibrium, as we would expect from the interpretation of a conflict with only one army involved. The same holds true in the $n$-dimensional case: an army starting with $0$ soldiers will not be affected by the conflict between the remaining armies, as is biologically reasonable to happen.

In the following, we distinguish three cases: the \emph{free-for-all} case, in which the matrix $\tilde{C}$ has all 1 entries except for the zeros on the diagonal; the \emph{army-of-one} case, in which the matrix $\tilde{C}$ has 1 entries only in the first row and column, except the entry $(1,1)$, and every other case. We will show that, in the free-for-all scenario, only one winner will emerge from the conflict; in the army-of-one case, there can be up to $n-1$ winners (interpreted as asymptotically surviving armies); the outcome of the remaining cases is highly case-dependent.

We remark that, although we were able to recover a constant of motion for some particular cases, it is our opinion that for higher dimensions, i.e. high number of armies involved, the existence of such a constant is not guaranteed.

\subsection{Free-for-all scenario}
We now look at equations \eqref{n-lanch_aim} and \eqref{n-lanch_unaim} for the case $n=3$. Since we are in the free-for-all scenario, each army is facing all the other armies, and we have that
\begin{equation}\label{C3d}
C =\dfrac{1}{2}\begin{pmatrix}
0 & 1 & 1\\
1 & 0 & 1\\
1 & 1 & 0
\end{pmatrix}.
\end{equation}
Hence system \eqref{n-lanch_aim} becomes
\begin{equation}
\label{3-lanch_aim}
\begin{cases}
    X_1'&= \frac{1}{2}(-x_2 X_2 -x_3 X_3),\\
    X_2'&= \frac{1}{2}(-x_1 X_1 -x_3 X_3),\\
    X_3'&= \frac{1}{2}(-x_1 X_1 -x_2 X_2),
\end{cases}
\end{equation}
whose only equilibrium is $(0,0,0)$. System \eqref{3-lanch_aim} inherits the same critical issue of system \eqref{lanch_aim}: one (see Figure \ref{fig:neg1}) or two armies (see Figure \ref{fig:neg2}) can become negative, while the remaining diverge to $+\infty$.  

\begin{figure}[!ht]  
\centering
    \begin{subfigure}[t]{0.45\textwidth}
        \centering
    \includegraphics[width=0.99\textwidth] {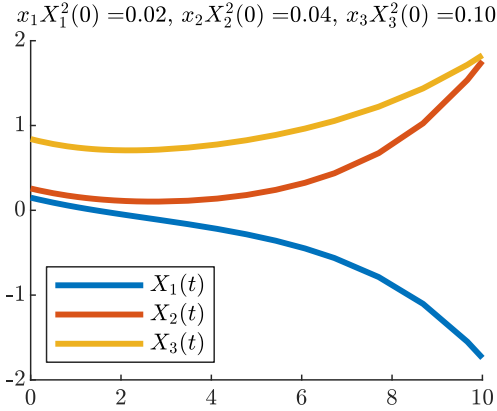}
    \caption{One negative, two positive diverging solutions.}
    \label{fig:neg1}
     \end{subfigure}
    \hspace{0.05\textwidth}
         \begin{subfigure}[t]{0.45\textwidth}
        \centering
         \centering
    \includegraphics[width=0.99\textwidth] {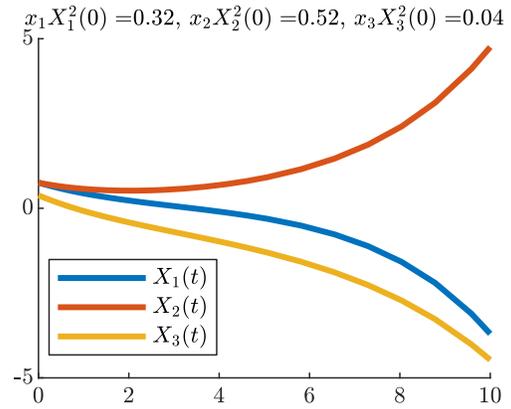}
    \caption{Two negative, one positive diverging solutions.}
    \label{fig:neg2}
         \end{subfigure}
         \caption{Evolution in time of the solutions of \eqref{3-lanch_unaim}, exhibiting (a) one negative and two positive diverging solutions, and (b) two negative and one positive diverging solutions. This fundamentally unrealistic behaviour can not be avoided with a simple linear system. The values relative to the Quadratic Law \eqref{quad_law} are displayed as the title of the figure, although they do not give useful information on the asymptotic behaviour of the middle solution.}
\end{figure}

In the remainder of this article, in order to overcome the critical issue of solutions diverging to $\pm\infty$, we will only focus on properties of generalized versions of \eqref{lanch_unaim}. In the following, we completely characterize the asymptotic behaviour of its 3D generalization. 
Using \eqref{C3d}, system \eqref{n-lanch_unaim} becomes
\begin{equation}
\label{3-lanch_unaim}
\begin{cases}
    X_1'&= \frac{1}{2}(-x_2 X_2 -x_3 X_3)X_1,\\
    X_2'&= \frac{1}{2}(-x_1 X_1 -x_3 X_3)X_2,\\
    X_3'&= \frac{1}{2}(-x_1 X_1 -x_2 X_2)X_3.
    \end{cases}
\end{equation}
Any point of the form $(X_1^*,0,0)$, $(0,X_2^*,0)$ or $(0,0,X_3^*)$ is an equilibrium of \eqref{3-lanch_unaim}. However, analysis of local stability at any of them provides one zero eigenvalue and two equal non-positive ones, namely $-x_iX_i^*/2$; hence, local stability analysis does not provide useful information. We will now directly proceed with a global stability analysis.

Since \eqref{3-lanch_unaim} is non-increasing in all its variables, and bounded (assuming non-negative initial conditions), each $X_i(t)$ will admit a finite non-negative limit as\linebreak $t \rightarrow +\infty$.

We are interested in predicting which army will win, given only the initial conditions $(X_1(0),X_2(0),X_3(0)) \in \mathbb{R}^3_{\geq 0}$ and the fighting abilities $(x_1,x_2,x_3) \in \mathbb{R}^3_{> 0}$.  We will focus on the case $(X_1(0),X_2(0),X_3(0)) \in \mathbb{R}^3_{> 0}$. Indeed if the initial condition is $X_i(0)=0$  for one $i$, system \eqref{3-lanch_unaim} reduces to \eqref{lanch_unaim}; if $X_i(0)=X_j(0)=0$ for $i\neq j$, the system is at equilibrium, and the same in the case $X_1(0)=X_2(0)=X_3(0)=0$, as one would expect.

\begin{proposition}\label{winner}
Assume that in system \eqref{3-lanch_unaim} initial conditions and parameters are such that $x_iX_i(0)>x_jX_j(0)\geq x_kX_k(0)$, for some permutation of $i,j,k\in\{1,2,3\}$. Then
$$
\lim_{t\rightarrow+\infty}X_i(t)>0 \quad \text{and} \quad \lim_{t\rightarrow+\infty}X_j(t)=\lim_{t\rightarrow+\infty}X_k(t)=0.
$$
\end{proposition}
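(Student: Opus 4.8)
The plan is to change variables so that the coupled system becomes symmetric, and then to extract from the explicit solution formulas an algebraic identity that forces $\int_0^{\infty}x_jX_j\,\mathrm dt<\infty$; the stated asymptotics then drop out. Concretely, I would first set $u_m(t):=x_mX_m(t)$ for $m\in\{1,2,3\}$. A direct substitution turns \eqref{3-lanch_unaim} into the symmetric system $u_m'=-\tfrac12\bigl(\sum_{\ell\ne m}u_\ell\bigr)u_m$, and the hypothesis becomes $u_i(0)>u_j(0)\ge u_k(0)>0$ (I work in the open positive orthant, as in the discussion preceding the statement; if some coordinate vanishes initially, the problem reduces to the two-dimensional case of Lemma~\ref{expo_dec}). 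Each coordinate hyperplane $\{u_m=0\}$ is invariant, so the orbit stays in the open positive orthant; since $u_m'\le 0$, we get $0<u_m(t)\le u_m(0)$, hence every $u_m(t)$ decreases to a finite limit $u_m^{\infty}\ge0$.

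Next I would record the elementary consequences of integrating logarithmic derivatives. For any pair, $\tfrac{\mathrm d}{\mathrm dt}(u_a-u_b)=-\tfrac12u_c(u_a-u_b)$ with $c$ the third index, so
\[
u_a(t)-u_b(t)=\bigl(u_a(0)-u_b(0)\bigr)\exp\Bigl(-\tfrac12\!\int_0^t u_c\Bigr);
\]
in particular the ordering $u_i(t)>u_j(t)\ge u_k(t)>0$ persists for all $t\ge0$, so writing $I_m(t):=\int_0^tu_m$ we have $I_i\ge I_j\ge I_k\ge0$. Integrating instead $u_m'/u_m$ gives $u_i(t)=u_i(0)e^{-\frac12(I_j+I_k)}$, $u_j(t)=u_j(0)e^{-\frac12(I_i+I_k)}$, and likewise for $u_k$.

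The crucial step is to equate the two expressions for $u_i-u_j$: substituting the last two formulas into $u_i-u_j=(u_i(0)-u_j(0))e^{-I_k/2}$ and cancelling the common factor $e^{-I_k(t)/2}$ gives
\[
u_i(0)\bigl(1-e^{-I_j(t)/2}\bigr)=u_j(0)\bigl(1-e^{-I_i(t)/2}\bigr)\le u_j(0),
\]
so $1-e^{-I_j(t)/2}\le u_j(0)/u_i(0)<1$ uniformly in $t$, whence $I_j(t)\le 2\log\!\frac{u_i(0)}{u_i(0)-u_j(0)}<\infty$ for all $t$, and therefore $I_k(t)\le I_j(t)$ is bounded as well. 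Hence $I_j(\infty),I_k(\infty)<\infty$, which yields $u_i^{\infty}=u_i(0)e^{-\frac12(I_j(\infty)+I_k(\infty))}>0$. Since $u_i$ is non-increasing with positive limit, $I_i(t)\ge u_i^{\infty}t\to\infty$, and then $u_j^{\infty}=u_j(0)e^{-\frac12(I_i(\infty)+I_k(\infty))}=0$ and $0\le u_k^{\infty}\le u_j^{\infty}=0$. Undoing the substitution gives $\lim_{t\to\infty}X_i(t)>0$ and $\lim_{t\to\infty}X_j(t)=\lim_{t\to\infty}X_k(t)=0$.

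I expect the only genuinely delicate point to be this crucial step: crude differential inequalities give merely $u_k(t)\lesssim t^{-1/2}$, which is not integrable and hence not even enough to show $u_i^{\infty}>0$, so one really needs the exact identity above — a remnant of Lanchester's linear law that survives the coupling with the third army — to bound $I_j(\infty)$. Everything else (invariance of the orthant, monotonicity, existence of the limits, and the final bookkeeping) is routine.
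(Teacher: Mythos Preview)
Your argument is correct and takes a genuinely different route from the paper's. The paper proceeds by comparison: it drops the $X_3$-term from the first two equations to bound $(X_1,X_2)$ from above by the solution $(Y_1,Y_2)$ of a 2D Lanchester system with halved coefficients, invokes Lemma~\ref{expo_dec} to conclude that $Y_2$ (hence $X_2$) decays exponentially, repeats the manoeuvre for $X_3$, and then uses the integrability of those exponential upper bounds to obtain $X_1^\infty>0$. You instead extract the exact identity $u_i(0)\bigl(1-e^{-I_j/2}\bigr)=u_j(0)\bigl(1-e^{-I_i/2}\bigr)$ directly from the symmetric form of the system, which gives a uniform bound on $I_j$ (and hence on $I_k\le I_j$) without any auxiliary system and without appealing to the explicit 2D solution formula. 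Your approach is more self-contained and yields the quantitative estimate $\int_0^\infty x_jX_j\,\mathrm dt\le 2\log\dfrac{x_iX_i(0)}{x_iX_i(0)-x_jX_j(0)}$, which the paper's method does not provide. The paper's approach is perhaps more modular in spirit (reduce to the known lower-dimensional case), but its comparison step --- bounding a coupled pair by the solution of a non-cooperative planar system --- is stated somewhat informally; your identity sidesteps that issue entirely.
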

\begin{proof}
Let us suppose, without loss of generality, that $x_1X_1(0)>x_2X_2(0)\geq x_3X_3(0)$. The proof for the remaining five cases is identical.

First of all, we remark that
$$
X_1'=\frac{1}{2}(-x_2 X_2 -x_3 X_3)X_1\geq \frac{1}{2}(-x_2 X_2(0) -x_3 X_3(0))X_1,
$$
since all the solutions are non-increasing, meaning
$$
X_1(t)\geq X_1(0) \text{exp}\bigg((-x_2 X_2(0) -x_3 X_3(0))\frac{t}{2}\bigg).
$$
The same reasoning applied to $X_2(t)$ and $X_3(t)$ ensure the solutions of \eqref{3-lanch_unaim} will never become negative.

Moreover, consider the first two equations of \eqref{3-lanch_unaim}; they can be bounded as follows:
$$
\begin{cases}
    X_1'&= \frac{1}{2}(-x_2 X_2 -x_3 X_3)X_1\leq -\frac{1}{2}x_2 X_1X_2,\\
    X_2'&= \frac{1}{2}(-x_1 X_1 -x_3 X_3)X_2\leq -\frac{1}{2}x_1 X_1X_2.
    \end{cases}
$$
This means that the trajectories of $X_1(t)$ and $X_2(t)$ are bounded from above by the respective solution of system 
$$
\begin{cases}
    Y_1'&= -\frac{1}{2}x_2 Y_1Y_2,\\
    Y_2'&= -\frac{1}{2}x_1 Y_1Y_2,
\end{cases}
$$
which is the classical 2D Lanchester model \eqref{lanch_unaim} with parameters $x_1/2$, $x_2/2$. From our assumptions, we know that $Y_1(t)\rightarrow Y_1(0)-x_2 Y_2(0)/x_1>0$, while $Y_2(t) \rightarrow 0$ as $t\rightarrow+\infty$, implying that $X_2(t) \rightarrow 0$, as well.

The same reasoning applied to the couple $X_1$ and $X_3$ shows that $X_3(t) \rightarrow 0$ as $t\rightarrow+\infty$.

We conclude by remarking that 
$$
X_1(t)=X_1(0)\text{exp}\bigg(-\dfrac{x_2}{2}\int_0^t X_2(s)\text{d}s -\dfrac{x_3}{2}\int_0^t X_3(s)\text{d}s\bigg).
$$
Recall \eqref{sol_unaim}: since both integrands in the exponential are bounded from above by exponentially vanishing functions, then
\begin{equation}\label{X1limit}
\lim_{t\rightarrow+\infty} X_1(t)=X_1(0)\text{exp}\bigg(-\dfrac{x_2}{2}\int_0^{+\infty} X_2(s)\text{d}s -\dfrac{x_3}{2}\int_0^{+\infty} X_3(s)\text{d}s\bigg)>0,
\end{equation}
which concludes the proof.
\end{proof}
\begin{remark} For our purpose, it is sufficient to know that the limit \eqref{X1limit} is strictly positive. However, one might obtain a lower bound for such a limit by bounding $X_2$ and $X_3$ from above with the explicit solution of the 2D Lanchester given in \eqref{sol_unaim}.
\end{remark}

The considerations above, and the proof of Proposition \ref{winner}, allow us to formulate the following, more general result.
\begin{theorem}\label{free-for-all}
Assume that the matrix $\tilde{C}$ 
is given as in the free-for-all scenario, i.e. every entry is 1 except the entries in the diagonal, which are 0. Then, the winner is the $i_W$-th army, where $i_W=\text{\emph{argmax} } x_i X_i(0)$.
\end{theorem}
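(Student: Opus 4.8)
The plan is to transcribe the argument of Proposition~\ref{winner} to $n$ armies. In the free-for-all scenario the weighted conflict matrix is $C=\tfrac1{n-1}(J-I)$, with $J$ the all-ones matrix, so \eqref{n-lanch_unaim} becomes
\[
X_i'=-\frac1{n-1}\Big(\sum_{k\neq i}x_kX_k\Big)X_i ,\qquad i=1,\dots,n .
\]
Relabel the armies so that $i_W=1$, i.e.\ $x_1X_1(0)>x_jX_j(0)$ for every $j\ge 2$. Exactly as in the proof of Proposition~\ref{winner}, I would first note that each $X_i'$ has the sign of $-X_i$ on $\mathbb R^n_{\ge 0}$, so that orthant is forward invariant, every $X_i$ is non-increasing, and the crude bound $X_i'\ge-\tfrac1{n-1}\big(\sum_{k\neq i}x_kX_k(0)\big)X_i$ keeps each $X_i$ strictly positive for all finite times; being monotone and bounded, each $X_i(t)$ then tends to a finite limit $X_i(\infty)\ge 0$ as $t\to+\infty$.

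The heart of the proof is then the same pairwise argument. For each fixed $j\ge 2$, dropping the (non-negative) terms indexed by $k\neq 1,j$ gives
\[
X_1'\le-\frac{x_j}{n-1}X_1X_j ,\qquad X_j'\le-\frac{x_1}{n-1}X_1X_j ,
\]
so $(X_1,X_j)$ is dominated by the solution of the planar unaimed model \eqref{lanch_unaim} with coefficients $x_j/(n-1)$ and $x_1/(n-1)$ and the same initial datum; since $x_1X_1(0)>x_jX_j(0)$, the $j$-component is the losing one there, hence converges to $0$ by Lemma~\ref{expo_dec}, and in fact exponentially fast in view of the explicit formula \eqref{sol_unaim}. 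Thus $X_j(\infty)=0$ for all $j\ge 2$, and moreover each $X_j$ is integrable on $[0,+\infty)$. Finally, integrating the first equation,
\[
X_1(t)=X_1(0)\exp\!\Big(-\frac1{n-1}\sum_{k\ge 2}x_k\int_0^t X_k(s)\,\mathrm ds\Big),
\]
and letting $t\to+\infty$, the convergence of all the integrals $\int_0^{+\infty}X_k$ forces $X_1(\infty)>0$. Hence army $i_W=1$ is the unique survivor, which is the assertion.

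The step I expect to require the most care is the pairwise domination — and, more precisely, upgrading ``$X_j\to 0$'' to ``$X_j$ integrable'', since it is the latter, not the (easy) extinction of the non-maximal armies, that actually produces $X_1(\infty)>0$. The planar model \eqref{lanch_unaim} is a competitive system, so the inequality $X_j\le(\text{its planar majorant})$ is not a black-box consequence of a comparison theorem; instead one should exploit that $x_1X_1(t)-x_jX_j(t)$ stays strictly positive for all $t$ — a one-line computation from the $n$-dimensional system, generalizing the linear law \eqref{lin_law} — which already yields $X_j'\le-\tfrac{x_j}{n-1}X_j^2$ and hence $X_j\to 0$, and then feed this back, together with the explicit planar solution \eqref{sol_unaim}, to obtain the exponential decay that makes $X_j$ integrable (alternatively, via a dichotomy: if $X_1(\infty)=0$ then all ratios $x_jX_j/x_1X_1$ would have to vanish, and a change of variables shows $\int_0^{+\infty}\sum_{k\ge2}x_kX_k<+\infty$ after all, a contradiction). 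Everything else is a routine rewriting of the $n=3$ computations already carried out for Proposition~\ref{winner}.
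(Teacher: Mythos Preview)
Your proposal is correct and is exactly the approach the paper has in mind: no separate proof of Theorem~\ref{free-for-all} is given, the authors simply remark that it follows from the proof of Proposition~\ref{winner}, and your write-up is the straightforward $n$-dimensional transcription of that argument. You are in fact more careful than the paper on the one delicate point---correctly noting that the pairwise domination by the planar system \eqref{lanch_unaim} is not a black-box comparison, and supplying instead the preservation of the ordering $x_1X_1(t)>x_jX_j(t)$ together with a change-of-variables/dichotomy argument for the integrability of the losing components, both of which are sound and fill a step the paper glosses over.
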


\subsection{Army-of-one scenario}

The army-of-one scenario is, in a way, opposite to the free-for-all: $n-1$ armies attack the remaining one, which we will assume, without loss of generality, to always be $X_1$. It is natural to interpret $X_2$, $X_3$, $\dots$, $X_n$ as allies, since they only focus on a common enemy. A similar scenario is also studied by \cite{Lin2014}. Hence, there are two possible outcomes: either $X_1$ survives, killing all the other armies, or $X_1$ eventually perishes, and the remaining armies are victorious. 
\smallskip

The $n\times n$ matrices $\tilde{C}$ and $C$ in this case are respectively

\begin{equation*}
\tilde{C}=\left( 
\begin{NiceArray}{c|cw{c}{1cm}c}
0 & 1 & \dots & 1 \\
\hline
1 & \Block{3-3}<\Large>{\mathbf{0}}\\
\vdots & & & \\
1 & & & \\
\end{NiceArray}
\right)
\ \ \text{and}\ \ 
C=\left(
\begin{NiceArray}{c|cw{c}{1cm}c}
0 & 1 & \dots & 1 \\
\hline
\frac{1}{n-1} & \Block{3-3}<\Large>{\mathbf{0}}\\
\vdots & & & \\
\frac{1}{n-1} & & & \\
\end{NiceArray}
\right).
\end{equation*}

For $n=3$, the system of ODEs is the following:
\begin{equation}\label{2vs1_unaim}
\begin{cases}
    X_1'&= (-x_2 X_2 -x_3 X_3)X_1,\\
    X_2'&= \frac{1}{2}(-x_1 X_1)X_2,\\
    X_3'&= \frac{1}{2}(-x_1 X_1)X_3.
\end{cases}
\end{equation}
\begin{figure}  
\centering
    \begin{subfigure}[t]{0.45\textwidth}
        \centering
    \includegraphics[width=0.99\textwidth] {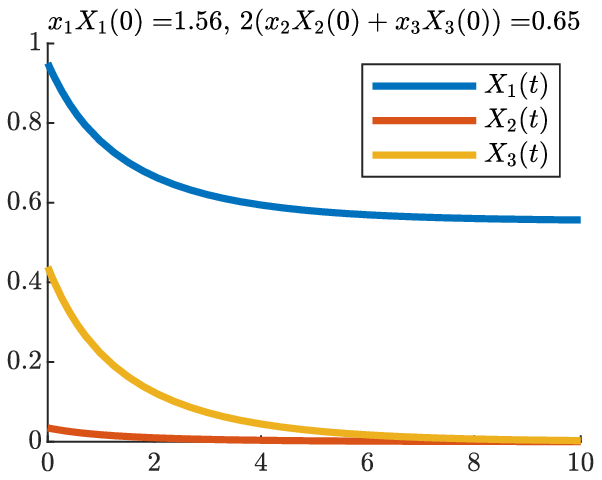}
    \caption{Army-of-one 3D, $X_1$ winning.}
    \label{fig:neg11}
     \end{subfigure}
    \hspace{0.05\textwidth}
         \begin{subfigure}[t]{0.45\textwidth}
        \centering
         \centering
    \includegraphics[width=0.99\textwidth] {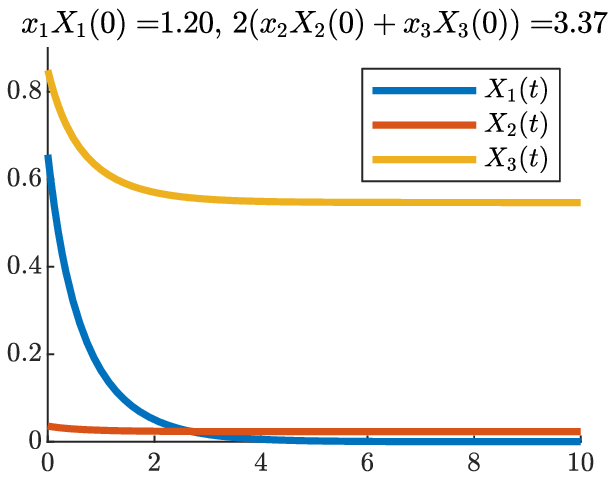}
    \caption{Army-of-one 3D, $X_1$ losing.}
    \label{fig:neg22}
         \end{subfigure}
         \caption{Evolution in time of the solutions of \eqref{2vs1_unaim}, exhibiting (a) $X_1$ surviving and $X_2$, $X_3$ vanishing, as $K>0$, and (b) $X_1$ vanishing and $X_2$, $X_3$ surviving, as $K>0$. Notice how $X_3$ in figure (b) does not vanish, even though it starts at very small value, due to its belonging to the winning side.}
\end{figure}

In this case, it is possible to recover a constant of motion which is closely related to the Lanchester Linear Law. In fact, by direct derivation with respect to the time variable $t$,
it is easy to see that the following holds, along solutions of \eqref{2vs1_unaim}:
\begin{equation}
\label{eq:2vs1}
    x_1X_1(t)-2\left ( x_2X_2(t)+ x_3X_3 (t)\right )= K,
\end{equation}
with $K$ a constant depending on $x_i$ and $X_i(0)$, $i=1,2,3$. In particular, as we will show in a more general setting in Theorem \ref{army-of-one}, \eqref{eq:2vs1} implies that, assuming the constant is positive at time $t=0$, $X_1$ will be the only survivor as $t\rightarrow +\infty$, see Figure \ref{fig:neg11}; whereas if the constant is negative, $X_1$ will perish and all the remaining forces will survive, see Figure \ref{fig:neg22}.

By using an analogous argument, one can easily prove a similar result for the general $n$-dimensional system:
\begin{equation}\label{Nvs1_unaim}
\begin{cases}
    X_1'&= (-x_2 X_2 -x_3X_3 -\dots -x_n X_n)X_1,\\
    X_2'&= \frac{1}{n-1}(-x_1 X_1)X_2,\\
     X_3'&= \frac{1}{n-1}(-x_1 X_1)X_3,\\
    \ \vdots\\
    X_n'&= \frac{1}{n-1}(-x_1 X_1)X_n.
\end{cases}
\end{equation}
The following statement summarizes the foregoing.

\begin{proposition}[Lanchester's $n$-dimensional Linear Law]
\label{Thm:ndim_LanchLinear}
The solutions to \eqref{Nvs1_unaim} satisfy the state equation
\begin{equation}\label{ndim_lin_law}
x_1X_1(t)-(n-1)(x_2X_2(t)+\dots + x_nX_n(t)) = K, \quad n\ge 2,
\end{equation}
where $K$ is a constant that depends only on the fighting abilities of the armies, $x_1, \dots x_n$, and on their initial size $X_1(0),\dots, X_n(0)$. 
\end{proposition}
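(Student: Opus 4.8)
The plan is to verify the claimed state equation by the same device used in Theorems \ref{Thm:LanchSquare} and \ref{Thm:LanchLinear}: differentiate the left-hand side of \eqref{ndim_lin_law} along trajectories of \eqref{Nvs1_unaim} and check that the derivative vanishes identically. Concretely, I would introduce the candidate first integral $L(t) := x_1 X_1(t) - (n-1)\sum_{j=2}^{n} x_j X_j(t)$ and compute $L'(t)$ by substituting the right-hand sides of \eqref{Nvs1_unaim}.

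For the first term, $X_1' = -\bigl(\sum_{j=2}^{n} x_j X_j\bigr) X_1$ gives $x_1 X_1' = -\,x_1 X_1 \sum_{j=2}^{n} x_j X_j$. For the remaining terms, each equation $X_j' = -\tfrac{1}{n-1}\,x_1 X_1 X_j$ (for $j=2,\dots,n$) yields $(n-1)\,x_j X_j' = -\,x_1 X_1\, x_j X_j$, and summing over $j \in \{2,\dots,n\}$ produces $(n-1)\sum_{j=2}^{n} x_j X_j' = -\,x_1 X_1 \sum_{j=2}^{n} x_j X_j$. Subtracting, the two contributions cancel exactly, so $L'(t) \equiv 0$ along solutions; hence $L$ is constant, and evaluating at $t=0$ identifies $K = x_1 X_1(0) - (n-1)\sum_{j=2}^{n} x_j X_j(0)$, which depends only on the $x_i$ and the $X_i(0)$. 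As consistency checks, $n=2$ recovers \eqref{lin_law} (with $g=x_1$, $r=x_2$ up to relabeling) and $n=3$ recovers \eqref{eq:2vs1}.

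I do not expect a genuine obstacle here; the proof is a one-line cancellation. The only point requiring a little care is the bookkeeping of the factor $n-1$: it is exactly the column-normalization weight $1/(n-1)$ carried by the off-diagonal entries of $C$ in the army-of-one scenario, and it is precisely this matching that makes the cancellation exact — any other re-scaling would generally destroy the conservation law. If one wishes to extract from \eqref{ndim_lin_law} the interpretive dichotomy (that $K>0$ forces $X_1$ to be the sole asymptotic survivor, while $K<0$ forces $X_1$ to perish and $X_2,\dots,X_n$ to survive), that goes beyond the present statement and would be argued as in Proposition \ref{winner}: boundedness and monotonicity give finite nonnegative limits for every $X_i$, the conserved quantity together with the sign of $K$ rules out one of the two limiting configurations, and the exponential representations such as $X_1(t) = X_1(0)\exp\!\bigl(-\sum_{j=2}^{n} x_j \int_0^t X_j(s)\,ds\bigr)$ show that the surviving populations stay bounded away from zero.
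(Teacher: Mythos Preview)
Your proof is correct and matches the paper's own argument exactly: the paper simply states that the claim follows by differentiating \eqref{ndim_lin_law} with respect to $t$ along solutions of \eqref{Nvs1_unaim}, which is precisely the cancellation you carry out. Your additional remarks on the role of the $1/(n-1)$ normalization and the consistency with \eqref{lin_law} and \eqref{eq:2vs1} are accurate, and the dichotomy you sketch at the end is indeed deferred in the paper to the separate Theorem \ref{army-of-one}.
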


The claim once again follows by deriving \eqref{ndim_lin_law} with respect to the time variable $t$ along the solutions of system \eqref{Nvs1_unaim}.

The sign of the constant of motion \eqref{ndim_lin_law} allows us to predict the outcome of the conflict, as stated in the following theorem.

\begin{theorem}\label{army-of-one}
Recall state equation \eqref{ndim_lin_law} and system \eqref{Nvs1_unaim}; if at $t=0$ the sign of the constant $K$ is positive (resp. negative), $X_1$ will win the war (resp. $X_1$ will perish).
\end{theorem}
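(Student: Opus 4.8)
The plan is to collapse the dynamics of $X_1$ onto a scalar logistic equation, by substituting the constant of motion \eqref{ndim_lin_law} into the first equation of \eqref{Nvs1_unaim}, and then to read the asymptotics directly off the (elementary) phase line of that scalar equation.

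First I would record the elementary facts that hold for \eqref{Nvs1_unaim} under the standing assumption of positive initial data, exactly as in the proof of Proposition \ref{winner}: each $X_i(t)$ stays strictly positive (every equation has the form $X_i'=(\,\cdot\,)X_i$), each $X_i$ is non-increasing (the bracketed factors are $\le 0$ because $x_j,X_j\ge 0$), and hence each $X_i(t)$ is bounded and converges to a finite, non-negative limit as $t\to+\infty$. Setting $S(t):=x_2X_2(t)+\dots+x_nX_n(t)\ge 0$, the state equation \eqref{ndim_lin_law} reads $(n-1)S(t)=x_1X_1(t)-K$, so the equation for $X_1$ becomes the autonomous scalar ODE
\begin{equation}\label{logistic-reduction}
X_1'=-S(t)\,X_1=\frac{1}{n-1}\,X_1\bigl(K-x_1X_1\bigr),
\end{equation}
whose equilibria are $0$ and $K/x_1$. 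Evaluating \eqref{ndim_lin_law} at $t=0$ also gives $K/x_1=X_1(0)-\tfrac{n-1}{x_1}S(0)<X_1(0)$ for positive initial data.

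If $K>0$, then $0<K/x_1<X_1(0)$; by uniqueness for \eqref{logistic-reduction} the trajectory cannot cross the equilibrium $K/x_1$, so $X_1(t)>K/x_1$ for all $t$. Being non-increasing and bounded below by $K/x_1>0$, $X_1(t)$ converges, and the limit of a convergent solution of an autonomous scalar ODE must be an equilibrium; hence $\lim_{t\to+\infty}X_1(t)=K/x_1>0$, i.e. $X_1$ survives. Substituting back into $(n-1)S(t)=x_1X_1(t)-K$ yields $S(t)\to 0$, and since $S$ is a sum of the non-negative terms $x_iX_i$ with every $x_i>0$, this forces $X_i(t)\to 0$ for all $i=2,\dots,n$: all the other armies perish. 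If instead $K<0$, the only equilibrium of \eqref{logistic-reduction} lying in $[0,+\infty)$ is $0$, so the non-increasing, convergent solution satisfies $X_1(t)\to 0$; quantitatively, $K-x_1X_1\le K<0$ gives $X_1'\le \tfrac{K}{n-1}X_1$, and Gr\"onwall's inequality yields $X_1(t)\le X_1(0)\exp\!\bigl(\tfrac{K}{n-1}t\bigr)$, so $X_1$ decays exponentially and $\int_0^{+\infty}X_1(s)\,\text{d}s<\infty$. For $i\ge 2$, integrating $X_i'/X_i=-\tfrac{x_1}{n-1}X_1$ gives $X_i(t)=X_i(0)\exp\!\bigl(-\tfrac{x_1}{n-1}\int_0^{t}X_1(s)\,\text{d}s\bigr)$, which therefore converges to $X_i(0)\exp\!\bigl(-\tfrac{x_1}{n-1}\int_0^{+\infty}X_1(s)\,\text{d}s\bigr)>0$: $X_1$ perishes while $X_2,\dots,X_n$ all survive.

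The one point that needs genuine care is the case $K>0$: one must ensure the monotone solution settles at the positive equilibrium $K/x_1$ rather than at $0$. This is exactly where the strict inequality $K/x_1<X_1(0)$ — equivalently $S(0)>0$, which holds for positive initial data — is used, together with the impossibility of a scalar ODE solution passing through an equilibrium. Everything else is bookkeeping with the constant of motion \eqref{ndim_lin_law} and the positivity/monotonicity already established in Proposition \ref{winner}.
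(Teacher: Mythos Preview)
Your proof is correct, and it uses essentially the same first step as the paper---introducing the aggregate $S=x_2X_2+\dots+x_nX_n$ (the paper calls it $Z$) so that the dynamics of the pair $(X_1,S)$ decouple from the individual $X_i$'s. Where you diverge is in the second step: the paper keeps the two-dimensional system
\[
X_1'=-SX_1,\qquad S'=-\tfrac{x_1}{n-1}X_1S,
\]
recognises it as a rescaled copy of \eqref{lanch_unaim}, and then imports the explicit solution of Lemma~\ref{expo_dec} to get the exponential decay needed for the integral bounds. You instead substitute the constant of motion \eqref{ndim_lin_law} to eliminate $S$ altogether, landing on the closed scalar logistic equation \eqref{logistic-reduction} for $X_1$ alone, and read off the asymptotics from its phase line plus a Gr\"onwall bound. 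Your route is slightly more self-contained (no need to invoke the explicit formulae \eqref{sol_unaim}) and yields the exact limit $X_1(\infty)=K/x_1$ in the $K>0$ case, which the paper's argument does not extract; the paper's route, on the other hand, makes the structural point that the army-of-one problem is literally a two-army Lanchester model in disguise. Both are clean and short.
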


\begin{proof}
From the arguments used in the proof of Proposition \ref{winner}, we know that all $X_i$, for $i=1,2,\dots,n$ will remain non-negative.

We introduce the auxiliary variable
$$
Z:=x_2X_2+\dots+x_nX_n.
$$
Then, the ODE governing the evolution of the $(X_1,Z)$ couple is
$$
\begin{cases}
    X_1'&= -ZX_1,\\
    Z' &= -\frac{x_1}{n-1} X_1Z.
\end{cases}
$$

Now, assume $K>0$. Then,
$$
\frac{x_1}{n-1}X_1(0)>Z(0),
$$
implies the asymptotic extinction of $Z$, which bounds each $X_i$ to the same fate. Since $Z(t)$ is decaying exponentially fast (recall Lemma \ref{expo_dec}), we know that 
$$
\lim_{t\rightarrow+\infty} X_1(t)=X_1(0) \text{exp}\bigg(-\int_0^{+\infty} Z(s)\text{d}s\bigg)>0. 
$$

Now, assume $K<0$. 
Then,
$$
\frac{x_1}{n-1}X_1(0)<Z(0),
$$
implies the exponential vanishing of $X_1$.
Lastly, observing that
$$
\lim_{t\rightarrow +\infty} X_i=X_i(0) \text{exp}\bigg(-\dfrac{x_1}{n-1}\int_0^{+\infty} X_1(s)\text{d}s\bigg)>0,
$$
since $X_1(t)$ is decaying exponentially fast (recall again Lemma \ref{expo_dec}), we know that all the allied armies will survive the conflict. 
\end{proof}

\begin{remark} Modelling combat with a continuous approach can, in some cases, provide counter intuitive results, such as very small armies in \eqref{Nvs1_unaim} not vanishing since they belong to the winning side of the conflict. In a discrete time model, a very small army might vanish entirely even though its allies win the war, as explained, for instance, by \cite{Fox2010}.
\end{remark}

\begin{remark} We remark that in the case $n=3$, the free-for-all and army-of-one cases completely describe every possible scenario, up to a renumbering of the armies involved.
\end{remark}

\begin{remark}
\label{Rmk:Lepin}
John \cite{Lepingwell1987} proposed a model, in which one of the two armies is split in two different units. The model is similar to the aimed version of \eqref{2vs1_unaim}. However, his model takes into account the proportional distribution of the forces of the single army between the two units of the other two forces (i.e. the single army distributes the attacks in proportion to the number or survivors of the other armies). Even thought the idea is brilliant, this model is affected by the same problems as the original one, namely solutions eventually assuming negative values.
\end{remark}

\subsection{Four armies example and general result}

Lastly, we provide an interesting example involving four armies, which does not fall in the free-for-all nor in the army-of-one setting.
We remark that the only cases for which, thus far, we have characterized the asymptotic behaviour are the $n$-dimensional free-for-all, see Theorem \ref{free-for-all}, and the $n$-dimensional army-of-one, see Theorem \ref{army-of-one}. 

However, for $n\geq 4$, there can be scenarios which do not fall in either of the two cases described thus far; see Figure \ref{Figura4}. The review of \cite{MacKay2006} contains various examples of multilateral conflicts, in particular the case of mixed forces, in which two or more armies fight from various regions. Moreover, we refer the interested reader to \cite{Kress2018b}. In the following, we describe a general scenario of multilateral conflict which, to our knowledge, has not been studied in the literature.

We introduce the following definition, for ease of notation.

\begin{definition}
\label{def:subconflict}
We define a \emph{conflict} as a triple $\kappa=(X(0),\tilde{C},D)$, where $X(0)$ is the vector $(X_1(0),\dots,X_n(0))$ of the initial values of the involved armies, $\tilde{C}$ is the adjacency matrix of Definition \ref{def_conflictmatrix}, and $D$ is the diagonal matrix such that $d_{i,i}=x_i$. 

A \emph{sub-conflict} $\kappa^j$ is given by a subset $X^j(0)$ of the armies and the corresponding matrices $\tilde{C}^j$ and $D^j$, stripped of the rows and columns of the armies not considered.
\end{definition}

As we already mentioned at the beginning of Section \ref{Sec3}, to a conflict $\kappa$ we can associate two systems of ODEs:
$$X' = -CDX,$$ which correspond to the aimed fire system in Lanchester's approach, and $$X'_i = (-CDX)_i X_i,$$ which correspond to the unaimed fire one. Here the matrix $C$ is derived from the matrix $\tilde{C}$ as in Definition \ref{def_conflictmatrix}.

\begin{definition}\label{def:bridge}
A node $X_i$ is called \emph{bridge node} if it divides the multilateral conflict $\kappa$ in $m\geq 2$ sub-conflicts $\kappa^1,\dots,\kappa^m$, such that 
$$
\bigcap_{j=1}^m X^j = X_i.
$$
\end{definition}

\tikzstyle{armata}=[rectangle,draw=black!50,fill=black!10,thick,minimum size=1cm]
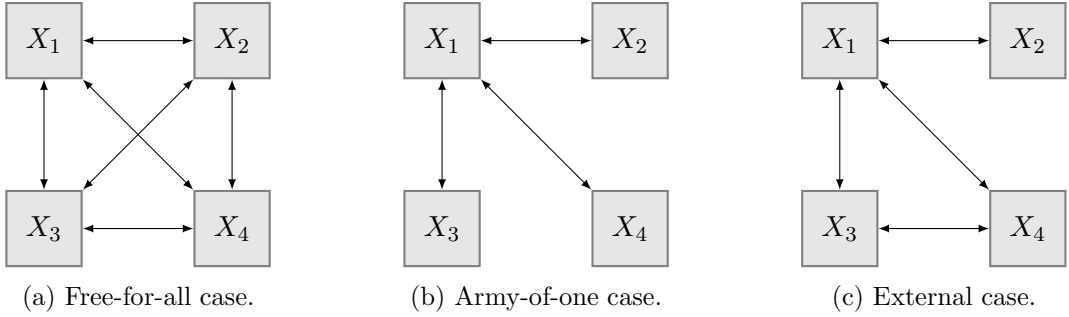
\begin{figure}[ht]
\centering
\begin{subfigure}{0.3\textwidth}\centering
\begin{tikzpicture}
\node[armata] (X_3) at ( 0,0) {$X_3$};
\node[armata] (X_1) at ( 0,2.5) {$X_1$};
\node[armata] (X_4) at ( 2.5,0) {$X_4$};
\node[armata] (X_2) at ( 2.5,2.5) {$X_2$};
\draw[latex-latex] (X_1.east) -- (X_2.west);
\draw[latex-latex] (X_3.east) -- (X_4.west);
\draw[latex-latex] (X_1.south) -- (X_3.north);
\draw[latex-latex] (X_4.north) -- (X_2.south);
\draw[latex-latex] (X_1.south east) -- (X_4.north west);
\draw[latex-latex] (X_3.north east) -- (X_2.south west);
\end{tikzpicture}
\caption{Free-for-all case.}
\label{Figura4a}
\end{subfigure}
\hspace{.5cm}
\begin{subfigure}{0.3\textwidth}\centering
\begin{tikzpicture}
\node[armata] (X_3) at ( 0,0) {$X_3$};
\node[armata] (X_1) at ( 0,2.5) {$X_1$};
\node[armata] (X_4) at ( 2.5,0) {$X_4$};
\node[armata] (X_2) at ( 2.5,2.5) {$X_2$};
\draw[latex-latex] (X_1.east) -- (X_2.west);
\draw[latex-latex] (X_1.south) -- (X_3.north);
\draw[latex-latex] (X_1.south east) -- (X_4.north west);
\end{tikzpicture}
\caption{Army-of-one case.}
\label{Figura4b}
\end{subfigure}
\hspace{.5cm}
\begin{subfigure}{0.3\textwidth}
\centering
\begin{tikzpicture}
\node[armata] (X_3) at ( 0,0) {$X_3$};
\node[armata] (X_1) at ( 0,2.5) {$X_1$};
\node[armata] (X_4) at ( 2.5,0) {$X_4$};
\node[armata] (X_2) at ( 2.5,2.5) {$X_2$};
\draw[latex-latex] (X_1.east) -- (X_2.west);
\draw[latex-latex] (X_3.east) -- (X_4.west);
\draw[latex-latex] (X_1.south) -- (X_3.north);
\draw[latex-latex] (X_1.south east) -- (X_4.north west);
\end{tikzpicture}
\caption{External case.}
\label{Figura4c}
\end{subfigure}
\caption{Visual representation of the 4D cases (a) free-for-all; (b) army-of-one; and (c) a case, neither free-for-all nor army-of-one, which we discuss in this section.}
\label{Figura4}
\end{figure}

We now analyse the scenario depicted in Figure \ref{Figura4c}. In our modelling approach it is described by the following system of ODEs:
\begin{equation}\label{4Dgen}
\begin{cases}
    X_1'&= (-x_2 X_2 -\frac{x_3}{2}X_3 -\frac{x_4}{2} X_4)X_1,\\
    X_2'&= -\frac{x_1}{3} X_1 X_2,\\
     X_3'&= (-\frac{x_1}{3} X_1-\frac{x_4}{2} X_4)X_3,\\
    X_4'&=(-\frac{x_1}{3} X_1 -\frac{x_3}{2}X_3)X_4.
\end{cases}
\end{equation}
From Definition \ref{def:bridge} it is easy to see that $X_1$ is  a bridge node for system \eqref{4Dgen}. Intuitively, one would be tempted to ``split'' the dynamics into two free-for-all subsystems (namely, $X_1$ vs. $X_2$ and $X_1$ vs. $X_3$ vs. $X_4$), and check the asymptotic behaviour for both of them with Theorem \ref{free-for-all}. The forces of $X_1$ would then be split in 3, one per each army it is facing.

This would mean finding the maxima in two sets: $\{x_1X_1(0)/3, x_2X_2(0)\}$ and \linebreak $\{x_1X_1(0)/3, x_3X_3(0)/2, x_4X_4(0)/2\}$, as highlighted in the titles of Figure \ref{fig:4D}. 

There are only two clear-cut cases: the first is when the maximum of both sets is $x_1X_1(0)/3$, as $X_1$ will result victorious against all its 3 opponents; see Figure \ref{fig:4D1}. The second is when the maximum of each set is $x_2X_2(0)$ and one among $x_3X_3(0)/4$ or $x_4X_4(0)/4$; see Figure \ref{fig:4D2}.

The remaining cases are not so simple to interpret at glance, but they can be described analytically. For example if $X_1$ would win against $X_3$ and $X_4$, but $X_2$ would win against $X_1$, then one should expect $X_2$ as the only winner. However, our simulations show that $X_1$ will eventually perish and the winner among $X_3$ and $X_4$ will be determined by their respective strength; see Figure \ref{fig:4D3}. Vice versa, if $X_1$ would win against $X_2$, but not in the free-for-all with $X_3$ and $X_4$, one would expect to see one between $X_3$ and $X_4$ as the only winner, with all the other armies reaching 0. However, in our simulations $X_1$ perishes (as expected) and $X_2$ survives; see Figure \ref{fig:4D4}.

Assume, for example, that $X_1$ would lose in the $X_1$ vs. $X_2$ sub-conflict, i.e. $x_1 X_1(0)/3< x_2 X_2(0)$, but would otherwise win in the $X_1$ vs. $X_3$ vs. $X_4$ one, i.e. $x_1 X_1(0)/3> x_3 X_3(0)/2, x_4 X_4(0)/2$. Then we can focus only on these two variables and bound their respective ODE from above by
$$
\begin{cases}
    X_1'&= (-x_2 X_2 -\frac{x_3}{2}X_3 -\frac{x_4}{2} X_4)X_1 \leq -x_2 X_2 X_1,\\
    X_2'&= -\frac{x_1}{3} X_1 X_2.
\end{cases}
$$
Recalling \eqref{sol_unaim}, we know that $X_1(t)$ is bounded from above by an exponentially vanishing function. Similarly, we can bound the ODEs governing $X_3$ and $X_4$ by
$$
\begin{cases}
    X_3'&= (-\frac{x_1}{3} X_1-\frac{x_4}{2} X_4)X_3 \leq -\frac{x_4}{2} X_4 X_3,\\
    X_4'&=(-\frac{x_1}{3} X_1 -\frac{x_3}{2}X_3)X_4\leq -\frac{x_3}{2}X_3 X_4.
\end{cases}
$$
Assume now, without loss of generality, that $x_3X_3(0)>x_4X_4(0)$. Then, again by \eqref{sol_unaim}, we know that $X_4(t)$ is bounded from above by an exponentially vanishing function. Combining these remarks, we obtain
$$
\lim_{t\rightarrow +\infty} X_3(t)= X_3(0) \text{exp}\bigg(-\dfrac{x_1}{3}\int_0^{+\infty} X_1(s)\text{d}s -\dfrac{x_4}{2}\int_0^{+\infty} X_4(s)\text{d}s \bigg)>0.
$$
The case in which $X_1$ would win against $X_2$, but would otherwise lose in the $X_1$ vs. $X_3$ vs. $X_4$ subconflict, can be analysed similarly.
\begin{figure}[ht!]  
\centering
    \begin{subfigure}[t]{0.45\textwidth}
        \centering
    \includegraphics[width=0.99\textwidth] {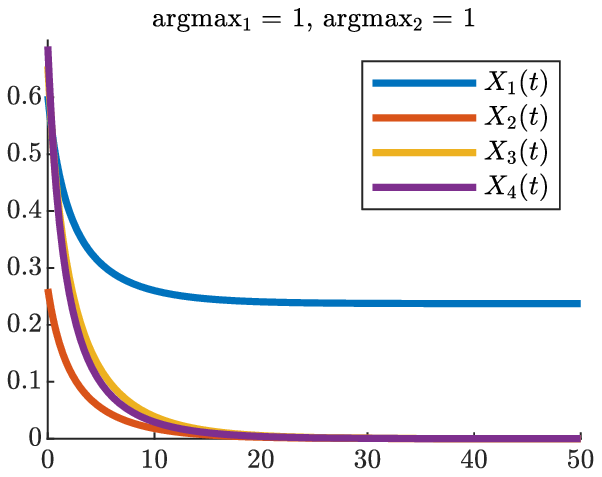}
    \caption{$X_1$ wins.}
    \label{fig:4D1}
     \end{subfigure}
    \hspace{0.05\textwidth}
         \begin{subfigure}[t]{0.45\textwidth}
        \centering
         \centering
    \includegraphics[width=0.99\textwidth] {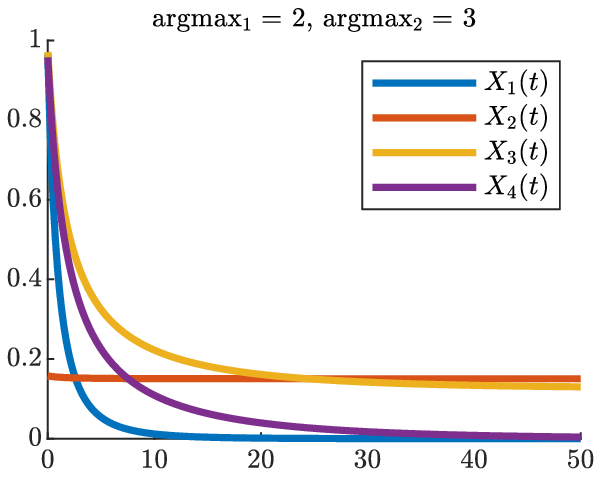}
    \caption{$X_2$ and $X_3$ win.}
    \label{fig:4D2}
         \end{subfigure}
                  \begin{subfigure}[t]{0.45\textwidth}
        \centering
         \centering
    \includegraphics[width=0.99\textwidth] {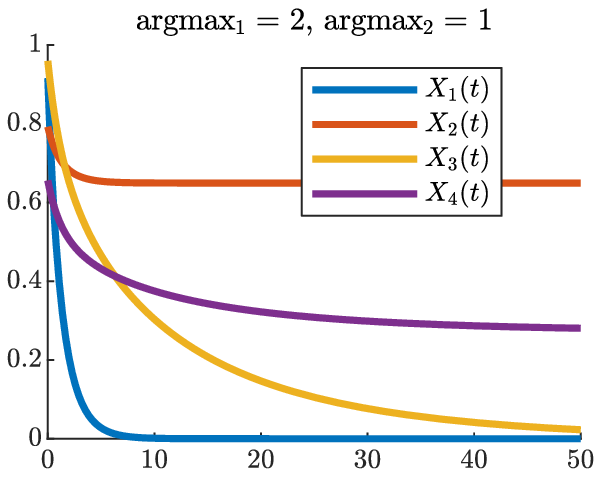}
    \caption{$X_4$ is the ``unexpected'' survivor.}
    \label{fig:4D3}
         \end{subfigure}
         \begin{subfigure}[t]{0.45\textwidth}
        \centering
         \centering
    \includegraphics[width=0.99\textwidth] {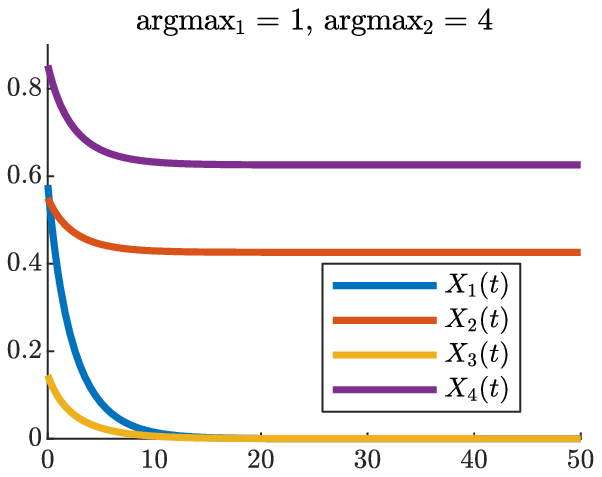}
    \caption{$X_2$ is the ``unexpected'' survivor.}
    \label{fig:4D4}
         \end{subfigure}
         
         \caption{Evolution in time of the solutions of \eqref{4Dgen}, in four different scenarios. As highlighted by the title, if we looked at the two conflicts separately, we would expect (a) $X_1$ to win both; (b) $X_2$ and $X_3$ to win; (c) $X_2$ to win against $X_1$, and $X_1$ to win against $X_3$, $X_4$; and (d) $X_1$ to win against $X_2$, and $X_4$ to win against $X_1$, $X_3$. Notice that only the first two predictions are satisfied.}\label{fig:4D}
\end{figure}

These considerations highlight the contribution of \emph{bridge} nodes, such as $X_1$ in our example, which play a crucial role in higher dimensional conflicts. Indeed, if they (asymptotically) vanish, the remaining conflicts become disconnected, and multiple sub-conflict proceed to the respective asymptotic outcome. We summarize the content of this section in the following statement.

\begin{theorem}\label{thm:bridge}
Assume $X_i$ is a bridge node for a conflict $\kappa$. Assume there exists a sub-conflict in which $X_i$ (asymptotically) loses. Then, $X_i$ loses in all other sub-conflicts as well. Moreover, each sub-conflict $\kappa^j$ in which $X_i$ is involved evolves qualitatively as $\kappa^j_i$, which is the sub-conflict of $\kappa^j$ without $X_i$. Hence, the winner of each of those sub-conflict is determined comparing the initial force of the remaining armies, i.e. by analyzing the sub-conflict $\kappa^j_i$.
\end{theorem}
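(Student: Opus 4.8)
The plan is to reduce the statement, exactly as in the four-army example above, to the elementary planar solutions \eqref{sol_unaim} together with the comparison technique used in the proof of Proposition \ref{winner}. First I would relabel so that the bridge node is $X_1$ and assume, without loss of generality, that $X_1$ asymptotically loses in the sub-conflict $\kappa^1$. Two preliminary facts are needed. (a) As in the proof of Proposition \ref{winner}, each $X_k$ is non-increasing and bounded below by $X_k(0)$ times a decaying exponential, so every $X_k$ stays in $\mathbb{R}_{\ge0}$ and tends to a finite non-negative limit. (b) A combinatorial remark on $C$: if $X_k$ belongs to $\kappa^j$ and $X_k\neq X_1$, then $X_k$ sits in a single connected component of the graph obtained by deleting the bridge node $X_1$, and all its neighbours belong to that same component (possibly together with $X_1$); hence $X_k$ has the same neighbours, and the same column-normalising degree, in $\kappa$ and in $\kappa^j$. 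Because of (b), the equation for such an $X_k$ in \eqref{n-lanch_unaim} is exactly the equation it would have in a conflict on the armies of $\kappa^j_1$ — same underlying graph as $\kappa^j_1$, with column weights inherited from $\kappa^j$ — plus the single extra term $-c_{k,1}x_1X_1X_k$; whereas $X_1$ obeys $X_1'=-\big(\sum_{j=1}^m\sum_{p\in N_j(1)}c_{1,p}x_pX_p\big)X_1$, the sum running over the neighbours $N_j(1)$ of $X_1$ inside each $\kappa^j$.

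The first step would be to prove that $X_1$ vanishes exponentially fast. Since the other armies are non-negative, dropping from $X_1'$ the contributions of all sub-conflicts other than $\kappa^1$ gives $X_1'\le-\big(\sum_{p\in N_1(1)}c_{1,p}x_pX_p\big)X_1$; paired with the (exact) equations of the remaining armies of $\kappa^1$, which involve only armies of $\kappa^1$, this says that $X_1$ is attacked at least as hard as in the stand-alone sub-conflict $\kappa^1$, where it loses by hypothesis. Concretely — exactly as in the four-army computation — it is enough to isolate one enemy $X_\ell\in N_1(1)$ for which the planar comparison $X_1'\le-c_{1,\ell}x_\ell X_\ell X_1$, $X_\ell'\le-c_{\ell,1}x_1X_1X_\ell$ is a $2$D Lanchester model \eqref{lanch_unaim} with $X_1$ on the losing side; then \eqref{sol_unaim} forces $X_1(t)\le Me^{-\mu t}$ for some $M,\mu>0$. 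This already yields claim (i): a single function tending to $0$ loses in every sub-conflict.

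The second step would be the decoupling. Fix $j$; by the first step and observation (b), every army $X_k$ of $\kappa^j_1$ satisfies
$$
X_k(t)=X_k(0)\exp\!\Big(-\!\int_0^t\!\!\sum_{p\in\kappa^j\setminus\{X_1\}}\!\!c_{k,p}x_pX_p(s)\,\mathrm{d}s\;-\;c_{k,1}x_1\!\int_0^t\!\!X_1(s)\,\mathrm{d}s\Big),
$$
and the last integral converges, because $X_1$ decays exponentially. Thus the subsystem carried by $\kappa^j_1$ is an asymptotically autonomous, exponentially small perturbation of the stand-alone sub-conflict $\kappa^j_1$. Running the same sandwiching scheme as in Proposition \ref{winner} and in the four-army example — trapping each coordinate from above, and when necessary from below, by explicit solutions of \eqref{lanch_unaim} — I would conclude that the $\omega$-limit of this subsystem coincides with that of $\kappa^j_1$; note that the survivor(s) predicted for $\kappa^j_1$ depend only on the ordering of the numbers $x_pX_p(0)$, which is unaffected by the positive rescaling of coefficients produced by deleting $X_1$. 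Applying this for every $j$ (including $j=1$) gives claims (ii) and (iii).

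The hard part will be making this last step watertight in full generality, i.e. showing that the exponentially small coupling $-c_{k,1}x_1X_1X_k$ cannot change the asymptotic winner of $\kappa^j$ relative to $\kappa^j_1$. When $\kappa^j_1$ is itself of free-for-all or army-of-one type (Theorems \ref{free-for-all} and \ref{army-of-one}), this is transparent, since every coordinate can then be trapped between elementary exponentials or planar Lanchester solutions, exactly as before; for a general $\kappa^j_1$ one would either iterate the bridge decomposition or invoke the theory of asymptotically autonomous systems, which demands that the limiting sub-conflict possess a well-understood global attractor. As the strict inequality in the hypothesis of Proposition \ref{winner} already suggests, the genuinely delicate cases are the degenerate ones — for instance a tie $x_pX_p(0)=x_qX_q(0)$ inside some $\kappa^j_1$ — which would have to be excluded or handled separately.
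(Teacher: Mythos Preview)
Your proposal is essentially the paper's own approach: the paper \emph{omits} the general proof of Theorem~\ref{thm:bridge}, stating only that it ``would follow the same steps of the discussion which precedes it'' (the four-army example), and your write-up is precisely a systematisation of that discussion --- the same upper bounds on $X_1'$, the same reduction to the planar solutions~\eqref{sol_unaim}, and the same exponential-integrability argument for the decoupling. In that sense you are aligned with, and in fact more explicit than, the paper.

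Two comments are worth recording. First, you go further than the paper in flagging where a general proof would need genuine work: the asymptotically-autonomous step when $\kappa^j_1$ is neither free-for-all nor army-of-one, and the degenerate ties $x_pX_p(0)=x_qX_q(0)$. The paper does not address these at all, so your caution is appropriate rather than a deficiency relative to the source. Second, your ``concrete'' shortcut in Step~1 --- isolating a \emph{single} enemy $X_\ell\in N_1(1)$ for which the planar comparison already kills $X_1$ --- is not guaranteed to exist in general (e.g.\ in an army-of-one sub-conflict with $K<0$ it may happen that no individual $x_\ell X_\ell(0)$ dominates $c_{\ell,1}x_1X_1(0)$, even though their sum does). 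Your preceding sentence, bounding $X_1$ by the full $\kappa^1$-drift, is the right general statement; the single-enemy reduction should be presented as what it is, namely the mechanism that happens to suffice in the four-army example rather than a universal step.
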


We omit the general proof of Theorem \ref{thm:bridge}, since it would require a cumbersome notation and would follow the same steps of the discussion which precedes it.
\begin{remark}
The 4D case actually highlights a limitation of our model. Let us consider the example depicted in Figure \ref{fig:4D3}: even though $X_1$ quickly approaches $0$, $X_3$ and $X_4$ keep attaching each other with only \emph{half} of their respective forces. This can be interpreted as an initial distributions of forces on multiple fronts, which do not get rearranged even if a front becomes quiet, having defeated the corresponding opponent. However, if we were to replace the homogeneous distribution of forces with a distribution proportional to the remaining enemies in each army (by using the same idea described in Remark \ref{Rmk:Lepin}, for example), then we would get the problem we already discussed of some armies becoming negative.
\end{remark} 

\section{Conclusion}
\label{Sec4}
Lanchester's first models have paved the way to more advanced modelling efforts. This paper provides a new approach, which we believe grants more in-depth explorations. Our formalization may sometimes provide counter intuitive results, which could hopefully be avoided with a discretization of the problem. In this section, we propose various outlooks, which represent the first step of a wider, mathematically consistent body of research.

The research carried out for this paper opened many interesting directions for extending Lanchester's model of warfare. %It is our intention to investigate some of them in the near future. 

From a mathematical point of view, the introduction of the randomness in the model (in terms of stochastic processes) is certainly worthwhile. Some advancements in this direction have already been made, for instance, by  \cite{Kingman2002} and by \cite{Osborne2003}. Moreover, the link with game theory, as well as decision theory, is clear and it is well exposed by \cite{Chen2011}, \cite{JJ2011}, and \cite{Chen2012}. This perspective connects in particular to the division of forces in a multilateral conflict, which in this first paper we assumed to be homogeneous. In a general scenario, which is the best, possibly heterogeneous and time-varying, division of forces for a given army?

As explained in Section \ref{Sec3}, a discrete modeling constitutes a stimulating challenge, which may lead to a more realistic idealization of the warfare problem. 

Finally, we remark the possibility of developing a new unconventional approach based on graph theory: this view could lay out a more in-depth exploration of higher dimensional cases, depending in particular on the adjacency matrix.

For all this reasons, we believe Lanchester's type models pose an attractive topic of research in the context of applied mathematics, and the study of the issues arising from their analysis can involve many other areas of mathematics, making this subject still fascinating.
\bigskip

\par\bigskip\noindent
\textbf{Acknowledgments.} 
NC, MC, and MS would like to thank the Polytechnic University of Milan, Polish Academy of Science, and %TU Delft, 
INRIA, respectively, for supporting their research.

\bibliographystyle{apalike}  \small
\bibliography{biblio}
\end{document}